\newcommand{\ben}{\begin{enumerate}}
\newcommand{\een}{\end{enumerate}}
\newcommand{\hide}[1]{}
\newcommand{\field}[1]{\mathbb{#1}} 
\newcommand{\beql}[1]{\begin{equation}\label{#1}}
\newcommand{\eeq}{\end{equation}}
\newcommand{\comment}[1]{}
\newcommand{\Floor}[1]{{\left\lfloor{#1}\right\rfloor}}
\begin{document}

\title{Approximate Dynamic Programming using \\ Halfspace Queries and Multiscale Monge decomposition}

\titlerunning{Approximate Dynamic Programming} 
\author{Gary L. Miller\inst{1}, Richard Peng\inst{1}, Russell Schwartz\inst{1,2} \and Charalampos E. Tsourakakis\inst{1}}
\authorrunning{Miller et al.}   
%
\tocauthor{Richard Peng, Russell Schwartz,Charalampos Tsourakakis }
\institute{ School of Computer Science, Carnegie Mellon University, USA\\
\email{glmiller@cs.cmu.edu}, \email{yangp@cs.cmu.edu}, \email{ctsourak@cs.cmu.edu}\\ 
\and
Department of Biological Sciences, Carnegie Mellon University, USA\\
\email{russells@andrew.cmu.edu} }

\maketitle

\begin{abstract}
Let  $P=(P_1, P_2, \ldots, P_n)$, $P_i \in \field{R}$ for all $i$,
be a signal and let  $C$ be a constant.
In this work our goal is to find a function $F:[n]\rightarrow \field{R}$ which 
optimizes the following objective function: 

\begin{equation}
\min_{F} \sum_{i=1}^n (P_i-F_i)^2 + C\times |\{i:F_i \neq F_{i+1} \} |
\end{equation}

The above optimization problem reduces to solving the following recurrence, which can be done efficiently using dynamic programming in $O(n^2)$ time:
$$ OPT_i = \min_{0 \leq j \leq i-1} \left[ OPT_{j}+\sum_{k=j+1}^i \left(P_k - \frac{\sum_{m=j+1}^i P_m}{i-j} \right)^2 \right]+ C  $$ 

\hide{ 
\begin{equation*}
OPT_i = \left\{ 
\begin{array}{l r}
  \min_{0 \leq j \leq i-1} \left[ OPT_{j}+\sum_{k=j+1}^i \left(f_k - \frac{\sum_{m=j+1}^i f_m}{i-j} \right)^2 \right]+ C & \quad \text{if}~ i > 0  \\
  0 & \quad \text{if~$i=0$}     \\ 
\end{array} \right. 
\label{eq:eqrec}
\end{equation*}
}

The above recurrence arises naturally in applications where we wish to approximate the original 
signal $P$ with another signal $F$ which consists ideally of few piecewise constant segments.
Such applications include database (e.g., histogram construction), 
speech recognition, biology (e.g., denoising aCGH data) applications and many more. 

In this work we present two new techniques for optimizing dynamic programming that can handle 
cost functions not treated by other standard methods.
The basis of our first algorithm is the definition of a constant-shifted variant of the objective function
that can be efficiently approximated using state of the art methods for range searching.
Our technique approximates the optimal value of our objective function within additive
$\epsilon$ error and runs in $\tilde{O}(n^{1.5} \log{ (\frac{U}{\epsilon}) )}$ time, where $U = \max_i f_i$.
The second algorithm we provide solves a similar recurrence  within a factor of $\epsilon$ and runs in $O(n \log^2n / \epsilon )$. 
The new technique introduced by our algorithm is the decomposition of the initial problem into a small (logarithmic) number of Monge 
optimization subproblems which we can speed up using existing techniques.
\end{abstract}

\section{Introduction}
\label{sec:intro}
Dynamic programming is a widely used problem solving technique with applications in various fields
such as operations research, biology, speech recognition, time series analysis and many others. 
Due to its importance a lot of work has focused on speeding up naive dynamic programming
implementations. Such techniques include the Knuth-Yao technique\cite{DBLP:journals/acta/Knuth71,fyao,804691}, a special case of the use
of totally monotone matrices \cite{1109562}, the SMAWK algorithm for finding the row-minima of totally monotone
matrices\cite{smawk}, and several other techniques exploiting special properties such as the convexity and 
concavity \cite{Eppstein88speedingup,135865}. The basis of these techniques lies in the theory of Monge properties for optimization \cite{240860}, 
which date back to the French mathematician Monge \cite{monge}. 

In this work, we consider the following recurrence, where $P \in \field{R}^n$ and $C$ is a constant:

$$ OPT_0 = 0 \text{,~~}OPT_i = \min_{0 \leq j \leq i-1} \left[ OPT_{j}+\sum_{k=j+1}^i \left(P_k - \frac{\sum_{m=j+1}^i P_m}{i-j} \right)^2 \right]+ C \text{, for~} i>0  $$ 

This recurrence arises naturally in several applications where one wants to approximate a given signal $f$ with a signal $F$
which ideally consists of few piecewise constant segments. Such applications include 
histogram construction in databases (e.g., \cite{671191,1132873,543637}), determining DNA
copy numbers in cancer cells from micro-array data (e.g., \cite{picard,trimmer}), speech recognition, 
data mining (e.g., \cite{DBLP:conf/sdm/TerziT06}) and many others.

In this work we present two new techniques for optimizing dynamic programming that can handle 
cost functions not treated by other standard methods.
The basis of our first algorithm is the definition of a constant-shifted variant of the objective function
that can be efficiently approximated using state of the art methods for range searching.
Our technique approximates the optimal value of our objective function within additive
$\epsilon$ error and runs in $\tilde{O}(n^{1.5} \log{ (\frac{U}{\epsilon}) )}$ time, where $U = \max_i f_i$.
The second algorithm we provide solves a similar recurrence  within a factor of $\epsilon$ and runs in $O(n \log^2n / \epsilon )$. 
The new technique introduced by our algorithm is the decomposition of the initial problem into a small (logarithmic) number of Monge 
optimization subproblems which we can speed up using existing techniques.

The remainder of the paper is organized as follows:
Section~\ref{sec:prelim} presents briefly existing work on the problem 
and the necessary background.  Section~\ref{sec:method} presents our proposed algorithms
and their theoretical analysis and Section~\ref{sec:concl} concludes the
paper with a brief summary and discussion.

\section{Background}
\label{sec:prelim}
In this section, we briefly summarize existing work on speeding up dynamic programming. 
First, we briefly present existing work on speeding up dynamic programming.
Then, we present state of the art results on reporting points in halfspaces, an important
case of range queries in computational geometry, and on optimizing Monge functions.

\subsection{Speeding up Dynamic Programming} 

Dynamic programming is a powerful problem solving technique introduced by Bellman \cite{862270}
with numerous applications in biology (e.g., \cite{picard,360861,9303}), in control theory (e.g., \cite{517430}), 
in operations research and many other fields. 
Due to its importance, a lot of research has focused on speeding up naive dynamic programming implementations.
A successful example of speeding up a naive dynamic programming implementation is the computation of optimal binary search trees.
Gilbert and Moore solved the problem efficiently using dynamic programming \cite{bb12353}.
Their algorithm runs in $O(n^3)$  time and for several years this running time was considered to be tight. 
In 1971 Knuth \cite{DBLP:journals/acta/Knuth71} showed that the same computation can be carried out in $O(n^2)$ time. 
This remarkable result was generalized by Frances Yao in \cite{fyao,804691}.
Specifically, Yao showed that this dynamic programming speedup technique works for a large class
of recurrences. She considered the recurrence $c(i,i) =0$, $c(i,j) = \min_{i<k\leq j}{ (  c(i,k-1) + c(k,j)) }  + w(i,j)$
for $i <j$ where the weight function $w$ satisfies the quadrangle inequality (see Section~\ref{sec:monge}) and proved
that the solution of this recurrence can be found in $O(n^2)$ time. 
Eppstein, Galil and Giancarlo have considered similar recurrences where they showed that naive $O(n^2)$ implementations
of dynamic programming can run in $O(n\log{n})$ time \cite{Eppstein88speedingup}. Furthermore, 
Eppstein, Galil, Giancarlo and Italiano have also explored the effect of sparsity \cite{146650,146656}, 
another key concept in speeding up dynamic programming.  
Aggarwal, Klawe, Moran, Shor, Wilber developed an algorithm, widely known as the SMAWK algorithm,
\cite{10546} which can compute in $O(n)$ time the row maxima of a totally monotone $n \times n$ matrix. 
The connection between the Knuth-Yao technique and the SMAWK algorithm was made 
clear in \cite{1109562}, by showing that the Knuth-Yao technique is a special case of the use of totally monotone matrices. 
The basic properties which allow these speedups are the convexity or concavity of the weight function.
The study of such properties data back to Monge \cite{monge} and are well studied in the literature, see for example \cite{240860}. 

Close to our work, lies the work on histogram construction, an important problem for database applications. 
Jagadish et al. \cite{671191} originally provided a simple dynamic programming 
algorithm which runs in $O(kn^2)$ time, where $k$ is the number of buckets and $n$ the input size
and outputs the best V-optimal histogram. 
Guha, Koudas and Kim \cite{1132873,543637} propose a $(1+\epsilon)$  approximation algorithm which runs in linear time.
Their algorithms exploits monotonicity properties of the key quantities involved in the problem.

\subsection{Reporting Points in a Halfspace} 

Let $S$ be a set of points in $d$-dimensional space $\field{R}^d$.
Consider the problem of preprocessing $S$ such that for any 
halfspace query $\gamma$ we can report the set $S \cap \gamma$ 
efficiently. This problem is a special case of range searching. For
an extensive survey see \cite{Agarwal99geometricrange}. 
For $d=2$, the problem has been solved optimally by \cite{chazelle_power_1985}.  For $d=3$, \cite{323248} gave a solution with nearly linear space and $O(\log{n}+k)$
query time, while \cite{100260} gave a solution with a more expensive preprocessing 
but $O(n\log{n})$ space. 
For dimensions $d \geq 4$, \cite{82363} gave an algorithm that requires with $O(n^{\Floor{d/2} +\epsilon})$ preprocessing time and space,
where $\epsilon$ is an arbitrarily small positive constant, and can subsequently
answer queries in $O(\log{n}+k)$ time.  
Matousek in \cite{DBLP:conf/focs/Matousek91} provides improved results on the problem,
which are conjectured to be optimal up to $O(n^{\epsilon})$ or polylogarithmic factors.  
Now, we refer to the main theorem of \cite{DBLP:conf/focs/Matousek91} on the emptiness decision problem,
i.e., determining whether $S \cap \gamma$  empty or not, phrased as theorem 1.3:

\begin{theorem}\cite{DBLP:conf/focs/Matousek91} 
Given a set of $S$ of $n$ points in $\field{R}^d$, $d \geq 4$, one can build 
in time $O(n^{1+\delta})$  a linear size data structure 
which can decide whether a query halfspace contains a point of $S$ 
in time $O(n^{1-1/\Floor{\frac{d}{2} } } 2^{c'\log^*{n}})$, where $\delta > 0$
is an arbitrarily small constant and $c'=c'(d)$ is a constant depending on the dimension.
\end{theorem}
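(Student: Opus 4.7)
The plan is to combine geometric duality with a hierarchical simplicial partition of the point set, so that a halfspace emptiness query is answered by visiting only those pieces of the partition that the bounding hyperplane actually crosses. First, I would dualize: mapping each point of $S$ to its dual hyperplane turns the question ``does the halfspace $\gamma$ contain a point of $S$?'' into the question ``does the query hyperplane $\gamma^*$ lie strictly above (or below) every dual hyperplane coming from $S$?'' This recasts the problem as a decision version of a linear-programming-type query about the upper/lower envelope of the dual arrangement, which is what makes the improved $\Floor{d/2}$ exponent possible later.

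Second, I would invoke Matousek's simplicial partition theorem: for any parameter $r$, partition $S$ into $O(r)$ classes $S_1,\dots,S_t$ with $|S_i|\le n/r$, each enclosed in a simplex $\Delta_i$, such that every hyperplane crosses at most $O(r^{1-1/d})$ of the $\Delta_i$. Building such a partition recursively yields a partition tree of linear size. At query time one classifies each $\Delta_i$ against the bounding hyperplane of $\gamma$: simplices entirely inside $\gamma$ immediately certify non-emptiness (using a precomputed bit per node), simplices entirely outside $\gamma$ are discarded, and crossed simplices are recursed into. A direct analysis of this recursion with a suitable branching factor yields $O(n^{1-1/d+\epsilon})$ query time and near-linear space, which is the ``general range reporting'' baseline.

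Third, to sharpen the exponent from $1-1/d$ down to $1-1/\Floor{d/2}$, I would exploit that emptiness is a decision problem concerning only the \emph{outermost} layers of the dual arrangement. By the Upper Bound Theorem the complexity of these shallow layers is governed by $\Floor{d/2}$ rather than $d$, so one can build \emph{shallow cuttings} via Clarkson--Shor random sampling and replace the generic partition tree by a partition tree adapted to the $k$-level for small $k$. In this refined tree, only $O(r^{1-1/\Floor{d/2}})$ cells are crossed by a halfspace boundary that is ``shallow'' with respect to the sample, and one can reduce the general case to the shallow case by checking a constant number of levels of the arrangement. Iterating this construction $\Theta(\log^* n)$ times drives the storage down to strictly $O(n)$, while the preprocessing time absorbs the slack as $O(n^{1+\delta})$ and introduces the $2^{c'\log^* n}$ factor in the query bound.

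The main obstacle will be the third step: simultaneously achieving linear storage, preprocessing time $O(n^{1+\delta})$ for arbitrarily small $\delta>0$, and the tight exponent $1-1/\Floor{d/2}$ requires a delicate interplay between the Upper Bound Theorem (to bound shallow complexity), Clarkson--Shor random sampling with shallow cuttings (to build the partition), and iterated refinement of the tree to shave the storage from $O(n^{1+\delta})$ to $O(n)$. Verifying that the number of crossed cells truly obeys the $r^{1-1/\Floor{d/2}}$ bound at every level of the recursion, and that the bookkeeping of emptiness bits propagates correctly up the tree, is where the technical weight of the argument lies.
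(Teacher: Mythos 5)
This statement is not proved in the paper at all: it is quoted verbatim as Theorem~1.3 of Matou\v{s}ek's paper \cite{DBLP:conf/focs/Matousek91} and used as a black box, so there is no in-paper proof to compare your attempt against. Judged on its own terms, your sketch is a broadly accurate reconstruction of how Matou\v{s}ek actually proves the result: duality, a partition-tree scheme, and the key refinement that for a \emph{decision} (emptiness) query one only needs control of the shallow levels of the dual arrangement, whose complexity is governed by $\lfloor d/2\rfloor$ via the Clarkson--Shor bounds on $(\le k)$-levels; this is exactly what Matou\v{s}ek packages as the Shallow Partition Theorem built from shallow cuttings.

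That said, what you have written is a roadmap rather than a proof, and all of the technical weight sits in steps you name but do not carry out: proving the shallow partition theorem (that any halfspace containing few points crosses only $O(r^{1-1/\lfloor d/2\rfloor})$ simplices of the partition), analyzing the resulting recursion so that the query time is $O(n^{1-1/\lfloor d/2\rfloor}2^{c'\log^* n})$ rather than $O(n^{1-1/\lfloor d/2\rfloor+\epsilon})$, and simultaneously achieving linear storage with $O(n^{1+\delta})$ preprocessing. One specific inaccuracy: the $2^{c'\log^* n}$ factor does not come from iterating the construction to shave the storage down to linear; it comes from building the partition tree with a branching parameter that grows at each level so that the tree has depth $O(\log^* n)$, each level contributing a constant factor to the query bound. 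None of this affects the present paper, which only needs the theorem as stated, but as a standalone proof your proposal would need the shallow partition theorem and the recursion analysis filled in before it could be considered complete.
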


\subsection{Monge Functions and Dynamic Programming}
\label{sec:monge}
Here, we refer to a theorem in \cite{Eppstein88speedingup} which we use in Section~\ref{subsec:3point2}.
A function w defined on pairs of integer indices is Monge if for any 4-tuple of indices $i_1 < i_2 < i_3 < i_4$,
$w(i_1,i_4)+w(i_2,i_3) \geq w(i_1, i_3) + w(i_2, i_4)$. Then the following Theorem holds: 

\hide{This can be viewed as a discrete definition of
convexity and is the assumption used in various works on optimizing dynamic programming \cite{fyao,dknuth}
and is a direct result of total monotonicity \cite{smawk,1644032}.
One such theorem is stated as follows as phrased in \cite{Eppstein88speedingup}: }

\begin{theorem}[\cite{Eppstein88speedingup}]

Given a Monge function $w: \{0 \dots n\} \times  \{0 \dots n\}  \rightarrow \field{R}$, and a vector $(a_0, a_1 \dots a_{n-1})$
the value of $\min_{j < i} \{ a_j + w(j,i)\}$ can be calculated for each $i=1,\ldots,n$ given
the previous values $a_0, a_1, \dots a_{i-1}$ in  $O(n\log{n})$ time total.

\end{theorem}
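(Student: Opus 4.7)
The plan rests on the standard monotonicity consequence of the Monge property: if we set $g_j(i) := a_j + w(j,i)$, then for any $j_1 < j_2$ the Monge inequality (applied to $j_1 < j_2 \le i_1 < i_2$) implies that $g_{j_2}(i) - g_{j_1}(i)$ is non-increasing in $i$. In particular, once $g_{j_2}$ falls weakly below $g_{j_1}$ at some index $i_0$, it stays below for every $i \ge i_0$. Hence the lower envelope of the family $\{g_j\}_{j < i}$ is piecewise, with each surviving index dominating on a single contiguous interval of $i$, and the ``best index'' map $j^\ast(i) := \arg\min_{j<i} g_j(i)$ is non-decreasing in $i$.

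I would encode this with a stack-based data structure that stores a sequence of pairs $(j_k, l_k)$, where $j_k$ currently dominates on $[l_k, l_{k+1}-1]$ among all known indices. When $a_j$ becomes available, I integrate $j$ into the stack as follows. First, while the stack is nonempty and $g_j(l_{\mathrm{top}}) \le g_{j_{\mathrm{top}}}(l_{\mathrm{top}})$, pop the top entry; by the monotonicity above, $j$ beats $j_{\mathrm{top}}$ on that whole interval and indeed everywhere to the right of $l_{\mathrm{top}}$. Second, if the stack is still nonempty, binary-search within the interval of the new top to locate the smallest $i^\ast$ at which $g_j(i^\ast) \le g_{j_{\mathrm{top}}}(i^\ast)$, and push $(j, i^\ast)$; if it is empty, push $(j, j+1)$. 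To answer the query at the next index $i$, I read off the dominating entry of the stack, which can be maintained as the front pointer or found by a single $O(\log n)$ binary search.

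Correctness follows by induction on $j$: the stack invariant is precisely that it represents the lower envelope of $\{g_0, \ldots, g_j\}$ over $\{j{+}1, \ldots, n\}$, and the two-step update above restores the invariant by the envelope-structure observation of the first paragraph. Running time is by amortization: each index is pushed at most once and popped at most once; each push costs $O(\log n)$ for the binary search and each pop costs $O(1)$; query cost is $O(\log n)$. Summing gives $O(n \log n)$ total.

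The main obstacle is justifying the binary search in the integration step: one must show that inside a guarded interval $[l_{\mathrm{top}}, l_{\mathrm{top}+1}-1]$, the sign of $g_j - g_{j_{\mathrm{top}}}$ changes at most once (so that a first crossing exists and can be probed by bisection). This is exactly the monotonicity of $g_j - g_{j_{\mathrm{top}}}$ implied by applying the Monge inequality to the 4-tuple $(j_{\mathrm{top}}, j, i, i{+}1)$ whenever $j_{\mathrm{top}} < j$, but the boundary cases (equality at $l_{\mathrm{top}}$, the new $j$ spanning the whole active range, and tie-breaking between equally good indices) require careful handling to ensure the stack invariant is preserved across updates.
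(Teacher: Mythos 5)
The paper does not actually prove this theorem---it is imported verbatim from Eppstein, Galil and Giancarlo \cite{Eppstein88speedingup}---and your argument is essentially the standard proof from that source: applying the Monge inequality to a 4-tuple $j_1<j_2<i<i'$ shows $g_{j_2}-g_{j_1}$ is non-increasing, so winner regions are contiguous and ordered by candidate index, a new candidate claims a suffix of positions, and the stack of intervals with one binary search per insertion yields $O(n\log n)$ amortized. Your reconstruction is correct; the only detail worth making explicit is that the 4-tuple must be strictly increasing as required by the Monge definition, which holds because candidate $j_2$ is only ever evaluated at positions $i\ge j_2+1$.
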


\section{Proposed Method}
\label{sec:method}
In the following, let the initial vector be $(P_1, \dots, P_n)$ and let $S_i = \sum_{j=1}^i P_j$. 
The transition function for the dynamic programming for $i>0$ can be rewritten as:

\begin{eqnarray} 
OPT_i &= \min_{j<i} OPT_j + \min_x \sum_{m=j+1}^i(P_m-x)^2 + C \nonumber \\
&= \min_{j<i} OPT_j + \sum_{m=j+1}^i P_m^2 - \frac{(S_i- S_j)^2}{i-j} + C 
\label{eq:eqopt}
\end{eqnarray}

The transition can be viewed as a weight function $w(j,i)$ that takes the two indices $j$ and
$i$ as parameters such that $w(j,i) =  \sum_{m=j+1}^i P_m^2 - \frac{(S_i- S_j)^2}{i-j} + C$.
The dynamic programming is then equivalent to a shortest path from $0$ to $n$.

The weight function does not have the Monge property, as demonstrated by the vector
$P_0 = 1, P_2 = 2, P_3 = 0, P_4 = 2, \ldots, P_{2k-1} = 0, P_{2k} = 2, P_{2k+1} = 1$.
When $C = 1$, thee optimal choices of $j$ for $i  = 1, \dots, 2k$ are $j = i-1$, i.e., we fit one segment per point. 
However, once we add in $P_{2k+1}$ the optimal solution changes to to fitting all points on a single segment. 
Therefore, choosing a transition to $j_1$ over one to $j_2$ at some $i$ does not allow us to discard $j_2$ from future considerations.
This is one of the main difficulties for applying techniques based on the increasing order of decision points,
such as the method of Eppstein {\em et al.} \cite{Eppstein88speedingup}, to reduce the complexity of the $O(n^2)$
algorithm in \cite{trimmer}. 

Let $DP_i = OPT_i - \sum_{m=1}^i P_m^2$. 
We claim that $DP_i$ is the solution to a simpler optimization problem:

\begin{lemma} 
$DP_i$ satisfies the following optimization formula:

\begin{equation}
DP_i = \left\{ 
\begin{array}{l r}
  \min_{j<i} DP_j - \frac{(S_i- S_j)^2}{i-j} + C & \text{if}~ i > 0  \\
  0                                              & \text{if}~i=0     \\ 
\end{array} \right. 
\label{eq:eqdp}
\end{equation}

\end{lemma}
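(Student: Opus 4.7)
The plan is to prove the lemma by direct algebraic substitution, starting from the rewritten recurrence \eqref{eq:eqopt} for $OPT_i$. Since $DP_i$ is defined as $OPT_i$ with the fixed offset $\sum_{m=1}^i P_m^2$ subtracted, I expect the recurrence for $DP_i$ to drop out once the offset is pushed through the $\min$.

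First I would handle the base case, which is immediate: $DP_0 = OPT_0 - 0 = 0$.
For $i > 0$, I would start from
\[
OPT_i \;=\; \min_{j<i}\left[OPT_j + \sum_{m=j+1}^{i} P_m^2 - \frac{(S_i-S_j)^2}{i-j}\right] + C,
\]
substitute $OPT_j = DP_j + \sum_{m=1}^{j} P_m^2$ inside the minimum (the $\sum_{m=1}^{j} P_m^2$ term is independent of the minimization, so it can stay inside without affecting the argmin), and then subtract $\sum_{m=1}^{i} P_m^2$ from both sides.

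The key simplification is that inside the bracket,
\[
\sum_{m=1}^{j} P_m^2 + \sum_{m=j+1}^{i} P_m^2 \;=\; \sum_{m=1}^{i} P_m^2,
\]
which then cancels against the offset we are subtracting on the left. What remains is exactly
\[
DP_i \;=\; \min_{j<i}\left[DP_j - \frac{(S_i-S_j)^2}{i-j}\right] + C,
\]
matching equation \eqref{eq:eqdp}.

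There is no real obstacle here; the only thing worth double-checking is the bookkeeping on the sums of squares to confirm that the telescoping across the $j$-dependent partition is clean, and that the constant (in $j$) offset $\sum_{m=1}^i P_m^2$ can indeed be pulled out of the $\min_{j<i}$ without issue. Once that is verified, the lemma follows with no further work.
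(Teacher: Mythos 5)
Your proposal is correct and follows essentially the same route as the paper: substitute $OPT_j = DP_j + \sum_{m=1}^{j} P_m^2$ into the rewritten recurrence, combine $\sum_{m=1}^{j} P_m^2 + \sum_{m=j+1}^{i} P_m^2 = \sum_{m=1}^{i} P_m^2$, and cancel against the subtracted offset. No gaps; nothing further is needed.
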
 

\begin{proof}

As $\sum_{m=1}^0 P_m^2=0$, the result is true for $i=0$.
If $i>0$, substituting in equation~\ref{eq:eqopt} gives:

\begin{align*}
DP_i &= OPT_i - \sum_{m=1}^i P_m^2\\
&= \min_{j<i} OPT_j  - \frac{(S_i- S_j)^2}{i-j} + \sum_{m=j+1}^i P_m^2 - \sum_{m=1}^i P_m^2 + C\\
&= \min_{j<i} DP_j + \sum_{m=1}^j P_m^2 - \frac{(S_i- S_j)^2}{i-j} + \sum_{m=j+1}^i P_m^2 - \sum_{m=1}^i P_m^2 + C\\
&= \min_{j<i} DP_j - \frac{(S_i- S_j)^2}{i-j} + C
\end{align*}

Elimination of the terms $ \sum_{m=1}^i P_m^2 $ from both sides gives our result. 
Observe that the second order moments of $OPT_i$ are absent from $DP_i$. 

We can use $\tilde{w}(j,i)$ to denote the shifted weight function, aka. $\tilde{w}(j,i) = - \frac{(S_i- S_j)^2}{i-j} + C$, it's easy to check that $w(j,i) = \tilde{w}(j,i) +  \sum_{m=i}^j P_m^2 $

\end{proof}

\subsection{$\tilde{O}(n^{1.5} \log{ (\frac{U}{ \epsilon }  ) } )$ algorithm to approximate within additive $\epsilon$}

\begin{algorithm}[!htb]
\caption{Approximation within additive $\epsilon$ using 4D halfspace queries}
\begin{algorithmic}
\STATE initialize 4D halfspace query structure $Q$
\FOR{$i=1$ to $n$}
	\STATE $low \gets 0$
	\STATE $high \gets nU^2$
	\WHILE{$high - low > \epsilon/n$}
		\STATE $m \gets (low+high)/2$
		\IF{ $Q.intersect((m, i, S_i, -1)^T x \leq m\cdot i + S_i^2)$}
			\STATE $high \gets m$
		\ELSE
			\STATE $low \gets m$
		\ENDIF
	\ENDWHILE
	\STATE $\tilde{DP}_i \gets (low+high)/2$
	\STATE $x \gets (i, \tilde{DP}_i, 2S_i, S_i^2+\tilde{DP}_ii)$
	\STATE $Q.insert(x)$
\ENDFOR
\end{algorithmic}\end{algorithm}

Our proposed method (as shown in Algorithm 1) uses the results of \cite{DBLP:conf/focs/Matousek91} to obtain a fast
algorithm for the additive approximation variant of the problem. \hide{Specifically, the algorithm maintains a 4-dimensional halfspace query data structure and tracks potential
transitions by inserting points corresponding to the DP states into them.} Specifically, the algorithm initializes a 4-dimensional halfspace query data structure. 
The algorithm then uses binary searches to compute an accurate estimate of the value $DP_i$ for $i=1,\ldots,n$.
As errors are introduced at each term, we use $\tilde{DP_i}$ to denote the approximate value of $DP_i$ calculated
by earlier iterations of the binary search, and $\bar{DP_i}$ to be the optimum value of the transition function computed
using the approximate values of $\tilde{DP_j}$. Formally:

$\bar{DP}_i = \min_{j<i} \tilde{DP}_j - \underbrace{\frac{(S_i- S_j)^2}{i-j}}_{\tilde{w}(j,i)} + C$. 

Theorem 3 shows that it's sufficient to approximate $\tilde{DP}_i$ to within an additive $\epsilon/n$ of $\bar{DP}_i$
in order to approximate $DP_n$ within additive $\epsilon$. 
Let $U = \max{ \{ \sqrt{C}, P_1, \dots, P_n\} } $,
the maximum value of the objective function is upper-bounded by the cost one would incur from declaring there is single interval with $x=0$, giving a bound of $U^2n$.
Therefore $O(\log(\frac{U^2n}{\epsilon/n})) = \tilde{O}( \log{ (\frac{U}{ \epsilon } ) } )$ iterations of binary search
at each $i$ are sufficient. 

To check whether $\bar{DP}_i \geq x+C$, we need to solve the decision problem of whether there
exists a $j < i$ such that the following inequality holds:

\begin{align*}
x + C &\geq \tilde{DP}_i \\
\exists{j<i}, x &\geq \tilde{DP_j} - \frac{(S_i- S_j)^2}{i-j}\\
\exists{j<i}, x(i-j) &\geq \tilde{DP_j}(i-j) - (S_i-S_j)^2\\
\exists{j<i}, xi + S_i^2 &\geq xj + \tilde{DP_j}i + 2S_iS_j - S_j^2 - \tilde{DP_j}j
\end{align*}

The term on the right hand side can be interpreted as the dot product between $(x, i, S_i, -1)$
and $(j, \tilde{DP_j}, 2S_j, S_j^2 + j\tilde{DP_j})$. If we think of the values $(j, \tilde{DP_j}, 2S_j, S_j^2+\tilde{DP_j}j)$ as points in $\field{R}^4$,
the decision problem becomes whether the intersection of a point set with a halfplane is null. If the
point set has size $n$, this can be done in $\tilde{O}(n^{0.5})$ per query and $O(\log{n})$ amortized
time per insertion of a point\cite{DBLP:conf/focs/Matousek91}.
So the optimal value of $DP_i$ can be found within an additive
constant of $\epsilon/n$ using the binary search in $\tilde{O}(n^{0.5}\log{ (\frac{U}{ \epsilon } ) } )$ time. Therefore,
we can proceed along the indices from 1 to $n$, find the approximately optimal value of $OPT_i$ and insert a point
corresponding to it into the query structure, getting an algorithm that runs in $\tilde{O}(n^{1.5} \log{ (\frac{U}{ \epsilon } ) } )$ time.

The following theorem states that a small error at each step suffices to give an overall good approximation.
We show inductively that if $\tilde{DP}_i$ approximates $\bar{DP}_i$ within $\epsilon/n$, $\tilde{DP}_i$ is within $i\epsilon /n$ additive error from the optimal value $DP_i$.
For the proof of Theorem~\ref{thrm:mainthrm}, see the Appendix~\ref{sec:appendix}.

\begin{theorem}
\label{thrm:mainthrm}
Let $\tilde{DP}_i$ be the approximation of our algorithm to $DP_i$. Then, the following inequality holds: 
\begin{equation} 
|DP_i - \tilde{DP}_i | \leq \frac{\epsilon i}{n}
\end{equation} 

\end{theorem}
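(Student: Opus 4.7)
The natural approach is induction on $i$, with base case $DP_0 = \tilde{DP}_0 = 0$ trivial. The key observation is that we have two distinct sources of error to control at step $i$: the propagation of past errors (via $|\tilde{DP}_j - DP_j|$ for $j < i$), and the binary-search error at the current step (the gap between $\tilde{DP}_i$ and $\bar{DP}_i$, which is at most $\epsilon/n$ by the stopping condition of the while loop). I will handle these separately by going through the intermediate quantity $\bar{DP}_i$ and invoking the triangle inequality: $|\tilde{DP}_i - DP_i| \leq |\tilde{DP}_i - \bar{DP}_i| + |\bar{DP}_i - DP_i|$.

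The first term is bounded by $\epsilon/n$ by construction of the binary search. The substantive work is to show $|\bar{DP}_i - DP_i| \leq \epsilon(i-1)/n$. I will prove this by two one-sided comparisons, each plugging the optimizer of one side into the other. Letting $j^*$ denote the true optimal predecessor for $DP_i$, the inductive hypothesis $\tilde{DP}_{j^*} \leq DP_{j^*} + \epsilon j^*/n$ gives
\[
\bar{DP}_i \;\leq\; \tilde{DP}_{j^*} - \frac{(S_i-S_{j^*})^2}{i-j^*} + C \;\leq\; DP_i + \frac{\epsilon j^*}{n}.
\]
Symmetrically, if $\tilde{j}$ is the minimizer defining $\bar{DP}_i$, the hypothesis $\tilde{DP}_{\tilde{j}} \geq DP_{\tilde{j}} - \epsilon \tilde{j}/n$ together with the suboptimality of $\tilde{j}$ for the true recurrence yields $\bar{DP}_i \geq DP_i - \epsilon \tilde{j}/n$. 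Since both $j^*, \tilde{j} \leq i-1$, we obtain $|\bar{DP}_i - DP_i| \leq \epsilon(i-1)/n$, and combining with the binary-search error gives the claimed bound $\epsilon i / n$.

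The main obstacle I anticipate is conceptual rather than technical: one must be careful that the crucial term $\tilde{w}(j,i) = -(S_i-S_j)^2/(i-j) + C$ is a fixed function of $j$ and $i$ (it does not depend on any approximations), so it cancels cleanly in both one-sided comparisons. If instead we tried to compare $\tilde{DP}_i$ directly to $DP_i$ without introducing $\bar{DP}_i$, the analysis would conflate the propagated error with the local binary-search error and the induction would not close at the right rate. Once $\bar{DP}_i$ is inserted as an intermediary, the argument reduces to standard bookkeeping, and the error telescopes to exactly $\epsilon i/n$ because each step contributes one fresh $\epsilon/n$ on top of the worst predecessor error $\epsilon(i-1)/n$.
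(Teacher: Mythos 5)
Your proposal is correct and follows essentially the same route as the paper's proof in the appendix: induction on $i$, the intermediate quantity $\bar{DP}_i$ with $|\bar{DP}_i - \tilde{DP}_i| \leq \epsilon/n$ from the binary search, the two one-sided comparisons plugging each minimizer ($j^*$ and $\bar{j}$) into the other expression to get $|DP_i - \bar{DP}_i| \leq (i-1)\epsilon/n$, and the triangle inequality to close the induction. Your remark that $\tilde{w}(j,i)$ is a fixed function independent of the approximations, so it cancels in both comparisons, is exactly the point the paper's argument relies on implicitly.
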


By substituting in Theorem 3 $i=n$ we obtain the following Corollary, proving that our algorithm 
is an approximation algorithm within $\epsilon$ additive error. 

\begin{corollary}
Let $\tilde{DP}_n$ be the approximation of our algorithm to $DP_n$. Then, the following inequality holds:
\begin{equation} 
|DP_n - \tilde{DP}_n | \leq \epsilon
\end{equation} 
\end{corollary}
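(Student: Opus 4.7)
The plan is to invoke Theorem~\ref{thrm:mainthrm} directly at the index $i = n$. That theorem guarantees $|DP_i - \tilde{DP}_i| \leq \epsilon i / n$ for every $i \in \{1, \dots, n\}$, so specializing to $i = n$ and simplifying $\epsilon n / n = \epsilon$ yields exactly the stated bound $|DP_n - \tilde{DP}_n| \leq \epsilon$. No additional induction, casework, or auxiliary estimate is needed at this stage; all of the technical work has been absorbed into Theorem~\ref{thrm:mainthrm}.

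The one point to verify is that the hypotheses of Theorem~\ref{thrm:mainthrm} are actually in force at $i = n$. By construction the algorithm runs its binary search loop for every $i$ from $1$ up to $n$, driving the gap between $\tilde{DP}_i$ and $\bar{DP}_i$ below $\epsilon/n$, which is precisely the per-step accuracy condition that Theorem~\ref{thrm:mainthrm} requires. Since the loop terminates only after $i = n$, the bound applies with $i = n$ and the substitution is legitimate.

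Consequently there is no real obstacle inside the corollary itself: the proof is a one-line substitution, and the genuine difficulty, namely showing that per-step additive errors of size $\epsilon/n$ accumulate into a total error of at most $\epsilon i / n$ rather than blowing up when compounded through the transitions $\tilde{DP}_j \to \bar{DP}_i$, was already handled inductively in the proof of Theorem~\ref{thrm:mainthrm} (deferred to the appendix). I would therefore write the corollary's proof simply as: apply Theorem~\ref{thrm:mainthrm} with $i = n$ and simplify.
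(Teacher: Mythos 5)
Your proposal is correct and matches the paper exactly: the paper obtains the corollary by substituting $i=n$ into Theorem~\ref{thrm:mainthrm}, which is precisely your one-line argument. Your additional remark that the per-step accuracy hypothesis holds through $i=n$ is a harmless (and accurate) bit of due diligence.
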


\hide{
\subsection{$O(n \log^3n / \epsilon )$ algorithm to approximate the shifted objective within a factor of $\epsilon$}

We show an algorithm that approximates the changed objective function within a factor of $\epsilon$.
Since the term $\sum_{i=1}^n P_i^2$ can dominate the result, this does not imply a good approximation of
$OPT$.  In the aCGH denoising application, however, one would expect this to produce a useful segmentation. 
}

\subsection{$O(n \log^2n / \epsilon )$ algorithm to approximate within multiplicative $\epsilon$}
\label{subsec:3point2}

Once again, consider the transition function $w$ in the optimization
formula for $OPT_i$. Our approach is based on approximating $w$
with a logarithmic number of Monge functions, while incurring a multiplicative
error of at most $\epsilon$. When viewed from the context of a shortest path problem, we are perturbing
the weight of each edge by $\epsilon$. So as long as the weight of each
edge is positive, the length of any path, and therefore the optimal answer computed
in the perturbed version, is off by a factor of $\epsilon$ as well.

The main idea of our algorithm is as follows:
we break our initial problem into a small (logarithmic) number of Monge optimization
subproblems which we can speed up using existing techniques, e.g., \cite{Eppstein88speedingup}.
We achieve this by detecting which part of $w(j,i)$ causes $w(j,i)$ not to be Monge,
and then by finding intervals in which we can approximate it accurately with a constant. 
We also make sure the optimal breakpoints of the Monge sub-problems lie in the
specified subintervals by making the function outside that subinterval arbitrarily large while
maintaining its Monge property.

\begin{algorithm}[!htb]
\caption{Approximation within factor of $\epsilon$ using Monge function search}
\begin{algorithmic}
\STATE Maintain $m = \log{n}/\log{(1+\epsilon)}$ Monge function search data structures $Q_1,\dots,Q_m$
\COMMENT{Each $Q_k$ corresponds to a Monge function $w_k(j,i)$ such that
$w_k(j,i) = (\sum_{m = j+1}^i (i-j)P_m^2 - (S_i-S_j)^2)/(1+\epsilon)^k) + C$ if
$(1+\epsilon)^{k-1} \leq i-j \leq (1+\epsilon)^k$, otherwise  arbitrarily large .}

\STATE $OPT_0 \gets 0$
\FOR {$k = 1$ to $m$}
\STATE $Q_k.a_0 \gets 0$
\ENDFOR

\FOR {$i = 1$ to $n$}
	\STATE $OPT_i \gets \infty$
	\FOR {$k = 1$ to $m$}
		\STATE $\text{localmin}_k \gets \min_{j<i} Q_k.a_j + w_k(j,i)$.
		\STATE $OPT_i \gets \min\{OPT_i, \text{localmin} + C \}$		
	\ENDFOR
	\FOR {$k = 1$ to $m$}
		\STATE $Q_k.a_i \gets OPT_i$
	\ENDFOR	
\ENDFOR
\end{algorithmic}
\end{algorithm}

We let $w'(j,i) = \sum_{m = j+1}^i (i-j)P_m^2 - (S_i-S_j)^2$. In other
words, $w(j,i) = w'(j,i)/(i-j) + C$. Since $Var(X)=E(X^2)-E(X)^2$,
an alternate formulation of $w'(j,i)$ is:

$$w'(j,i) = \sum_{j+1 \leq m_1 < m_2 \leq i} (P_{m_1} - P_{m_2})^2$$

\begin{lemma}
w'(j,i) is Monge, in other words, for any $i_1<i_2<i_3<i_4$, $w'(i_1, i_4)+w'(i_2,i_3) \geq w'(i_1,i_3) + w'(i_2, i_4)$.
\end{lemma}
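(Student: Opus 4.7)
The plan is to work directly with the alternative formulation $w'(j,i) = \sum_{j+1 \leq m_1 < m_2 \leq i} (P_{m_1} - P_{m_2})^2$ stated just above the lemma, so that each $w'$ value becomes a sum of squared differences over pairs in a half-open index window. The Monge inequality then reduces to a pure counting statement about which pairs get counted on each side.

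First I would partition the range $\{i_1+1, \ldots, i_4\}$ into three disjoint blocks $A := \{i_1+1, \ldots, i_2\}$, $B := \{i_2+1, \ldots, i_3\}$, and $C := \{i_3+1, \ldots, i_4\}$. Writing $P(X) := \sum_{a<b,\ a,b \in X}(P_a - P_b)^2$ for the within-block pair sum and $Q(X,Y) := \sum_{a \in X,\, b \in Y}(P_a - P_b)^2$ for the cross-block pair sum, each of the four terms decomposes according to which blocks its index window contains: $(i_1, i_4]$ covers $A \cup B \cup C$; $(i_1, i_3]$ covers $A \cup B$; $(i_2, i_4]$ covers $B \cup C$; and $(i_2, i_3]$ covers just $B$. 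Concretely, $w'(i_1,i_4) = P(A) + P(B) + P(C) + Q(A,B) + Q(A,C) + Q(B,C)$, $w'(i_2,i_3) = P(B)$, $w'(i_1,i_3) = P(A) + P(B) + Q(A,B)$, and $w'(i_2,i_4) = P(B) + P(C) + Q(B,C)$.

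Second, I would subtract the two sides of the claimed inequality using this decomposition. Each of $P(A)$, $P(B)$, $P(C)$, $Q(A,B)$, $Q(B,C)$ appears with identical multiplicity on both sides, so the only surviving term is $Q(A,C)$, which appears once in $w'(i_1,i_4)$ on the left and zero times on the right (since neither the window $(i_1,i_3]$ nor $(i_2,i_4]$ simultaneously contains $A$ and $C$). Thus $w'(i_1,i_4) + w'(i_2,i_3) - w'(i_1,i_3) - w'(i_2,i_4) = Q(A,C) = \sum_{a \in A,\, c \in C}(P_a - P_c)^2 \geq 0$, which is exactly the Monge inequality.

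The main obstacle, such as it is, is purely the careful half-open bookkeeping: verifying that $P(B)$ really appears with multiplicity $2$ on each side, and that the two middle terms $w'(i_1,i_3)$ and $w'(i_2,i_4)$ contribute exactly the cross sums $Q(A,B)$ and $Q(B,C)$ to match the corresponding terms from $w'(i_1,i_4)$ on the left. Once the block decomposition is pinned down, the nonnegativity of the leftover $Q(A,C)$ is automatic, and no further properties of the $P_m$'s beyond real-valuedness are needed.
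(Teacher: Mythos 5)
Your proof is correct and takes essentially the same approach as the paper: both work from the pairwise formulation $w'(j,i)=\sum_{j+1\le m_1<m_2\le i}(P_{m_1}-P_{m_2})^2$ and compare how many times each pair is counted on the two sides. Your block decomposition just organizes that counting more explicitly, yielding the exact identity that the difference equals $Q(A,C)\ge 0$, where the paper only bounds the multiplicities.
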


\begin{proof}
Since each term in the summation is non-negative, it suffices to show that any pair of $(m_1,m_2)$ is summed as many
times on the left hand side as on the right hand side. If $i_2+1 \leq m_1 < m_2 \leq i_3$, each term is counted
twice on each side. Otherwise, each term is counted  once on the left hand side since $i_1+1 \leq m_1 < m_2 \leq i_4$
and at most once on the right hand side since $[i_1+1, i_3] \cap [i_2+1, i_4] = [i_2+1, i_3]$.
\end{proof}

Also, as $w'(i,j)$, $i-j$ and $C$ are all non-negative, approximating $w'(i,j)/(i-j)$
within a factor of $\epsilon$
gives an approximation of $w(i,j)$ within a factor of $\epsilon$.
For each $i$, we try to approximate $i-j$ ($j<i$) with a constant $c'$ such that $c' \leq i-j \leq c'(1+\epsilon)$.
Since $1 \leq i-j \leq n$, we only need $\log_{(1+\epsilon)}n  = \log{n} / \log (1+\epsilon) $ distinct values
of $c'$ for all transitions. Note that this is equivalent to $j$ being in the range$[l,r] = [ i-c'(1+\epsilon), i-c']$.

Since $c'$ is a constant, $w'(j,i)/c'$ is also a Monge function. However, notice that we can only use $j$ when
$c' \leq i-j \leq c'(1+\epsilon)$. This constraint on pairs $(j, i)$ can be enforced by
setting $w_k(j,i)$ to arbitrarily large values when $(j,i)$ do not satisfy the condition.
This ensures that $j$ will not be used as a breakpoint for $i$.
Furthermore, $w_k$ needs to be adjusted to remain Monge in order to keep the assumptions of Theorem 2 valid.
Since the $[i_1+1, i_4]$ is the longest and $[i_2+1, i_3]$ is the shortest range respectively, one possibility is to assign
exponentially large values to very long and short ranges. The following is one possibility
when $M$ is an arbitrarily large positive constant:

\begin{equation}
w_k(j,i) = \left\{ 
\begin{array}{l r}
  2^{n-i+j}M & i - j < c' = (1+\epsilon)^{k-1} \\
  2^{i-j}M & i - j > (1+\epsilon)c' = (1+\epsilon)^{k} \\
  w'(j,i)/c'           & \text{otherwise}     \\ 
\end{array} \right. 
\label{eq:eqwk}
\end{equation}

\begin{lemma}
$w_k$ is Monge. That is, for any 4-tuple $i_1 < i_2 < i_3 < i_4$, $w_k(i_1, i_4)+w_k(i_2,i_3) \geq w_k(i_1,i_3) + w_k(i_2, i_4)$.
\end{lemma}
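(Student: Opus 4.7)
The plan is to verify the Monge inequality for $w_k$ by case analysis on the regimes of the four pair differences involved. Set $d_1 = i_4 - i_1$, $d_2 = i_3 - i_1$, $d_3 = i_4 - i_2$, $d_4 = i_3 - i_2$, so the inequality reads $w_k(i_1,i_4) + w_k(i_2,i_3) \geq w_k(i_1,i_3) + w_k(i_2,i_4)$, with the LHS indexed by $d_1, d_4$ and the RHS by $d_2, d_3$. Write $\mathrm{S}$, $\mathrm{M}$, $\mathrm{L}$ for the three regimes $\delta < c'$, $c' \leq \delta \leq c''$, $\delta > c''$ respectively (with $c' = (1+\epsilon)^{k-1}$, $c'' = (1+\epsilon)c'$).

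First I would record two arithmetic preliminaries. One: the identity $d_1 + d_4 = d_2 + d_3$. Two: the strict integer-gap bounds $d_2 - d_4 = d_1 - d_3 = i_2 - i_1 \geq 1$ and $d_3 - d_4 = d_1 - d_2 = i_4 - i_3 \geq 1$. Combined with $2^{-1} + 2^{-1} = 1$ these give the doubling estimates
\[
2^{n - d_4} \geq 2^{n - d_2} + 2^{n - d_3}, \qquad 2^{d_1} \geq 2^{d_2} + 2^{d_3},
\]
which are the workhorses for absorbing RHS penalty terms into LHS ones.

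Second, I would split into six exhaustive cases by the regimes of $(d_4, d_1)$, noting that $d_4 \leq d_2, d_3 \leq d_1$ sandwiches the regimes of $d_2, d_3$ between those of $d_4$ and $d_1$. In the pure case $(\mathrm{M}, \mathrm{M})$ all four differences lie in $[c', c'']$, so $w_k = w'/c'$ throughout and the inequality reduces, via the positive scaling $1/c'$, to the Monge property of $w'$ established in the preceding lemma. In the pure cases $(\mathrm{S}, \mathrm{S})$ and $(\mathrm{L}, \mathrm{L})$ the inequality becomes $2^{n-d_1} + 2^{n-d_4} \geq 2^{n-d_2} + 2^{n-d_3}$ and its analog with $2^{x}$, both of which are Karamata's inequality applied to the convex functions $x \mapsto 2^{n-x}$ and $x \mapsto 2^{x}$, using the majorization $(d_1, d_4) \succeq (d_2, d_3)$ that follows from the two preliminaries.

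Third, in the mixed cases $(\mathrm{S}, \mathrm{M})$, $(\mathrm{M}, \mathrm{L})$, and $(\mathrm{S}, \mathrm{L})$ the LHS already carries an extreme penalty of the form $M\cdot 2^{n - d_4}$ whenever $d_4 \in \mathrm{S}$ and of the form $M\cdot 2^{d_1}$ whenever $d_1 \in \mathrm{L}$. Because $d_4 \leq d_2, d_3$, any short RHS penalty forces $d_4 \in \mathrm{S}$ and is individually dominated by the short LHS penalty; the first doubling estimate then absorbs even two such RHS penalties into the single LHS one. Symmetrically, any long RHS penalty forces $d_1 \in \mathrm{L}$ and is absorbed via the second estimate. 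The remaining middle-regime contributions are bounded uniformly by $O(n^2 U^2 / c')$ independent of $M$, so taking $M$ sufficiently large swallows that bounded residual and leaves the inequality.

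The main obstacle is precisely this two-for-one absorption: a single LHS penalty such as $M \cdot 2^{n - d_4}$ must dominate the sum $M \cdot 2^{n - d_2} + M \cdot 2^{n - d_3}$ whenever both RHS pairs fall into the short regime, and symmetrically for the long regime. This would fail under the weak bounds $d_4 \leq d_2, d_3$ alone (e.g.\ the degenerate $d_2 = d_3 = d_4$ would give $2^{n-d_4} < 2 \cdot 2^{n-d_4}$), so the whole argument hinges on the strict integer-gap bounds $d_2 - d_4, d_3 - d_4 \geq 1$ guaranteed by the distinct integer indices $i_1 < i_2 < i_3 < i_4$.
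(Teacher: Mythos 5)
Your proof is correct and follows essentially the same strategy as the paper's: reduce the all-middle-band case to the Monge property of $w'$, use the unit integer gaps $i_2-i_1,\ i_4-i_3 \geq 1$ to get the factor-of-two domination $2^{n-(i_3-i_2)} \geq 2^{n-(i_3-i_1)} + 2^{n-(i_4-i_2)}$ (and its mirror for long ranges), and let the arbitrarily large $M$ absorb the bounded middle-band contributions. The only differences are organizational — the paper cases on the regimes of the cross differences $i_3-i_1,\ i_4-i_2$ under a WLOG, while you case on the extreme differences and invoke Karamata for the pure short/long cases — but the key ideas coincide.
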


\begin{proof}
As $M$ is arbitrarily large, we may assume $w_k(j,i) \geq w'(j,i)$. If $c' \leq i_3-i_1, i_4-i_2 \leq (1+\epsilon)c'$, then
$w_k(i_1,i_3) + w_k(i_2, i_4) = (1/c')(w'(i_1,i_3) + w'(i_2, i_4)) \leq (1/c')(w'(i_1,i_4) + w'(i_2, i_3)) \leq w_k(i_1,i_4) + w_k(i_2, i_3)$

We assume $i_3-i_1 \leq i_4-i_2$ since the mirror case can be considered similarly.
Suppose $i_3-i_1 < c'$ and $i_4-i_2 \leq c'(1+\epsilon)$.
Then as $i_2 > i_1$, $i_3-i_2 \leq i_3-i_1 - 1$. $w_k(i_2, i_3)  = 2^{n-i_3+i_2}M \geq 2 \cdot 2^{n-i_3-i_1} \geq w_k(i_1, i_3) + w_k(i_2, i_4)$
The case of $c' \leq i_3-i_1$ and $c'(1+\epsilon) < i_4-i_2$ can be done similarly.

Suppose $i_3-i_1 < c'$ and $c'(1+\epsilon) < i_4-i_2$. Then $i_3 - i_2 < i_3-i_1$ and $i_4 - i_1 > i_4-i_2$ gives
$w_k(i_1, i_4) \geq w_k(i_2, i_4)$ and $w_k(i_2, i_3) \geq w_k(i_1, i_3)$. Adding them gives the desired result.

\end{proof}

\hide{Bug corrected $[l,r] = [(1-c'(1+\epsilon))i , (1-c')i]$}
So the equation for $OPT_i$ becomes: 

\begin{align*}
OPT_i &= \min_{j \in [l,r]} OPT_j + \frac{w'(i,j)}{i-j} + C\\
&\approx \min_{j \in [l,r]} OPT_j + w_k(j,i) + C
\end{align*}

Note that storing values of the form $2^kM$ using only their exponent $k$ suffices for comparison,
so this adjustment does result in any change in runtime.
By Theorem 2, the Monge function optimization problem for each
$w_k$ can be solved in $O(n\log{n})$ time, giving a total runtime of $O(n\log^2{n}/\epsilon)$.
Pseudocode for this algorithm is shown in Algorithm 2. The algorithm uses $m = \log{n} / \epsilon$
copies of the algorithm mentioned in theorem 2 implicitly.

\hide{
\hide{Since $S_j$ and $S_j^2+DP_j$ are constants that depend only on $j$, the goal is to find among all pairs of values}
This is equivalent to finding among all pairs of values  $(S_j, -j, DP_j - (1/c') S_j^2)$ the one that
has the minimum dot product with $( (2/c')S_i, C/c', 1)$. This is an extreme point query among a set of points
in $\field{R}^3$ and can be done in $O(\log{n})$ time using a convex hull.

Also, note that for each value of $c'$, as $i$ increases, the values of $l_k$ and $r_k$ are also increasing for each $k$.  Each $j$
is therefore in consideration for only a range of $i$ values for each $k$ and inserted and deleted once from $H_k$.
Furthermore, the order in which the $j$s are deleted are the same as they are inserted.
As all the operations take $O(\log^2{n})$ amortized time and there are $O(\log{n} / \log(1+\epsilon) ) = O(\log{n} / \epsilon)$
values of $c'$, the total runtime becomes $O(n \log^3n / \epsilon)$.

Note that if we only want to approximate only the $(S_i-S_j)/(i-j)$ term within a factor of $\epsilon$,
the $Cj/c'$ term can be moved outside of the term representing the dot product and a dynamic 2D convex hull \cite{riko1,riko2}
can be used instead to maintain the list of candidate $j$ values, giving a runtime of $O(n \log^2n / \epsilon)$.

Instead of updating with the optimum value provided by the ray-shooting query,
one might alternatively use the index of the point returned by it to calculate the cost function of the DP accurately.  This alternative
method does not give a better bound, but is likely to perform better in practice as the comparison across
the optimum values of the different segments is still accurate. This alternative can be viewed as computing a much smaller
candidate set of $j$ with which to update $DP_i$.
}

\hide{

\subsection{$O(n\log{n})$ algorithm for the problem in the $L_\infty$ norm}

The $\tilde{O}(n^{1.5} \log{ ( \frac{U}{\epsilon} ) } )$ solution is a strong indication that faster solutions exist for this problem.
However, the practicality of the above solution is limited due to the complicatedness of 4D halfspace query
data structure. A different approach is to use the infinite norm $L_\infty$ instead of the quadratic norm $L_2$. 
This norm is well justified by observations on aCGH data \cite{coriell,trimmer} since points of a specific segment
tend to have a small variation.
Therefore, we are only concerned here with the maximum variation within an interval. This formulation can be considered similar to using the $L_2$ norm by the following lemma:

\begin{lemma}
If $|P_{i_1}-P_{i_2}|>2\sqrt{2C}$, then in the optimal solution of the dynamic programming using $L_2$ norm, $i_1$ and $i_2$ are
in different segments.
\end{lemma}

\begin{proof}
The proof is by contradiction. Suppose the optimal solution has a segment $[i,j]$ where $i \leq i_1 < i_2 \leq j$, and its optimal
$x$ value is $x^{*}$. Then consider splitting it into 5 intervals $[i, i_1-1]$, $[i_1, i_1]$, $[i_1+1, i_2-1]$, $[i_2, i_2]$, $[i_2+1, r]$.
If we let $x=x^{*}$ in the intervals not containing $i_1$ and $i_2$, their values are same as before.
Also, as  $|P_{i_1}-P_{i_2}|>2\sqrt{2C}$,
$ (P_{i_1}-x)^2 + (P_{i_2}-x)^2>2\sqrt{2C}^2 = 4C$. So by letting $x=P_{i_1}$ in $[i_1, i_1]$ and $x=P_{i_2}$ in $[i_2, i_2]$,
the total decreases by more than $4C$. This is more than the added penalty of having 4 more segments, a contradiction with the
optimality of the segmentation. ~\textit{QED}
\end{proof}

The optimization problem using the infinity norm results in the following recurrence equation: 

$$OPT_i = \min_{j<i} OPT_j + \min_x \max_{j+1 \leq m \leq i}{|P_m-x|} + C$$

It is clear that the optimal value of $x$ for an interval $[j+1,i]$ is $\frac{1}{2}(\max_{j+1 \leq m \leq i}{P_m} + \min_{j+1 \leq m \leq}P_m)$, giving an objective value of $\frac{1}{2}(\max_{j+1 \leq m \leq i}P_m- \min_{j+1 \leq m \leq i}P_m)$. 
Then the problem is equivalent to the following after dividing C by a factor of 2:

$$OPT_i = \min_{j<i} OPT_j + \max_{j+1 \leq m \leq i}P_m- \min_{j+1 \leq m \leq i}P_m+ C$$

This problem is solved by using a data structure that maintains 3 sequences $a_1 \dots a_n$, $b_1 \dots b_n$,
$c_1 \dots c_n$ and that supports the actions of setting a contiguous segment of either the $b$ or the $c$ sequence to some value, modifying a single entry of the $a$ sequence, and querying for the maximum value of some $a_i+b_i+c_i$, each in $O(\log{n})$. Pseudocode for such an algorithm, where the data structure is used implicitly, is listed in Algorithm 3.

\begin{algorithm}[!htb]
\caption{Dynamic Programming Under the Infinity Norm}
\begin{algorithmic}
\STATE $a_0 \gets 0$
\STATE $a_{1 \dots n} \gets \infty$
\STATE $b_i, c_i \gets 0$
\FOR {$i = 1$ to $n$}
	\STATE $j \gets \max \{m: m < i, P_m \geq P_i \} $
	\STATE $b_{j+1 \dots i} \gets P_i$
	
	\STATE $j \gets \min \{m: m < i, P_m \leq P_i \} $
	\STATE $c_{j+1 \dots i} \gets -P_i$
	
	\STATE $DP_i \gets \min \{a_j+b_j+c_j: 0 \leq j \leq n\} + C$
	\STATE $a_i \gets DP_i$
\ENDFOR
\end{algorithmic}
\end{algorithm}

We use entry $j$ of this data structure to represent the value of $OPT_j + \max_{j+1 \leq m \leq i}P_m- \min_{j+1 \leq m \leq i}P_m+ C$
when we are at entry $i$. We use $a_j$ to keep the value of $OPT_j$, $b_j$ for the value of $\max_{j+1 \leq m \leq i}P_m$ and
$c_j$  for the value of $\min_{j+1 \leq m \leq i}P_m$. Then, when we go from entry $i-1$ to $i$, all entries $j$ such that
$P_m < P_i$ for all $j < m \leq i$ need to have their $b_j$ values updated to $P_i$. This condition on $j$ is equivalent
to $P_m < P_i$ for all $j < m \leq i$. The set of $j$ whose $b_j$ need to be modified is then the interval
$[\max\{j:j<i, P_j \geq P_i\}+1, i]$. The values of $c_j = \min_{j+1 \leq m \leq i}P_m$
can be maintained similarly.

This data structure is realized by augmenting binary search tree \cite{cormen_introduction_2001} (chapter 14). Specifically, 
we keep the sequence in a  binary search tree where each element corresponds
to a triple $(a_j, b_j, c_j)$ for some $j$ and is sorted by the value of $j$ so contiguous sequences can
be decomposed into a small number of subtrees of the binary search tree.
Within each node, we use two fields to track the most recent updated values of $b$ and $c$ for that entire
subtree such that a parent's $b$ and $c$ updates override those of its children. These values can then be propagated along a search path in the tree. Within each subtree, we track the maximum values of $a_j$, $a_j+b_j$, $a_j+c_j$ and $a_j+b_j+c_j$,
ignoring any more up-to-date values of $b$ and $c$ in its ancestry. These values can be maintained when combining subtrees,
so the runtime of each operation is the same as that of a general binary search tree, giving a total of $O(n\log{n})$ for the algorithm.

}

\section{Conclusions}
\label{sec:concl}
In this work we present two new techniques for optimizing dynamic programming that can handle 
cost functions not treated by other standard methods. The first algorithm 
approximates the optimal value of our objective function within additive
$\epsilon$ error and runs in $\tilde{O}(n^{1.5} \log{ (\frac{U}{\epsilon}) )}$ time, where $U = \max_i f_i$. 
The efficiency of our algorithm is based on the work of Jir\'{\i} Matousek \cite{DBLP:conf/focs/Matousek91},
since our binary search on the value of a costant-shifted variant of the objected reduces 
to performing halfspace queries. 
The second algorithm solves a similar recurrence within a factor of $\epsilon$ and runs in $O(n \log^2n / \epsilon )$. 
The new technique introduced by our algorithm is the decomposition of the initial problem into a small (logarithmic) number of Monge 
optimization subproblems which we can speed up using existing techniques \cite{Eppstein88speedingup}. 

While the recurrences we solve are not treated using existing techniques for speeding up dynamic 
programming in an exact way, the results we obtain suggest that the $O(n^2)$ bound 
is not tight, i.e., there exists more structure which we can take advantage of. For example, the following lemma holds(see Appendix for a proof):

\begin{lemma}
If $|P_{i_1}-P_{i_2}|>2\sqrt{2C}$, then in the optimal solution of the dynamic programming using $L_2$ norm, $i_1$ and $i_2$ are
in different segments.
\label{thrm:lemmalast}
\end{lemma}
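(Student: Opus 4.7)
The plan is to argue by contradiction, assuming that in some optimal $L_2$ segmentation $i_1$ and $i_2$ both lie inside a single segment $[i,j]$ with fitted constant $x^\ast$. I will construct a competing segmentation of strictly smaller total cost, exhibiting the claimed contradiction. The construction is to split $[i,j]$ into the five intervals $[i, i_1-1]$, $\{i_1\}$, $[i_1+1, i_2-1]$, $\{i_2\}$, $[i_2+1, j]$, retaining the constant $x^\ast$ on the three longer intervals and fitting $x = P_{i_1}$ on $\{i_1\}$ and $x = P_{i_2}$ on $\{i_2\}$.

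Next I would compare the two costs term by term. On the three outer intervals the contribution to the squared error is unchanged because the same constant $x^\ast$ is used. On the two singleton intervals the new squared error is $0$, so the quadratic cost decreases by exactly $(P_{i_1}-x^\ast)^2 + (P_{i_2}-x^\ast)^2$. Meanwhile the penalty increases by $4C$ because four additional segment breakpoints have been introduced. Thus it suffices to show
\[
(P_{i_1}-x^\ast)^2 + (P_{i_2}-x^\ast)^2 > 4C.
\]

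The key inequality I would invoke is the elementary identity $(a-x)^2 + (b-x)^2 \geq \tfrac{1}{2}(a-b)^2$, attained with equality at $x = (a+b)/2$. Applying it with $a = P_{i_1}$, $b = P_{i_2}$ and using the hypothesis $|P_{i_1}-P_{i_2}| > 2\sqrt{2C}$, the left-hand side is at least $\tfrac{1}{2}(P_{i_1}-P_{i_2})^2 > \tfrac{1}{2}\cdot 8C = 4C$, which is exactly what is needed. The new segmentation therefore has strictly smaller total cost than the assumed optimum, yielding the required contradiction.

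I do not anticipate any real obstacle; the only mildly delicate point is handling the degenerate cases in which $i_1 = i$ or $i_2 = j$, so that the five-interval decomposition collapses into four or three pieces. In these cases strictly fewer than four new breakpoints are added, so the penalty increase is at most $4C$ while the savings argument is unchanged, and the contradiction still goes through. I also note that this is the same statement that appears (under the $L_\infty$ discussion) earlier in the paper, so this appendix proof is essentially a rewriting of that argument tailored to the $L_2$ cost, with the convexity bound $(a-x)^2+(b-x)^2 \geq (a-b)^2/2$ in place of the trivial triangle-style bound used there.
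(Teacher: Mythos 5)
Your proof is correct and takes essentially the same route as the paper's: the identical five-interval split of the offending segment, keeping $x^{*}$ on the outer pieces and using the bound $(P_{i_1}-x^{*})^2+(P_{i_2}-x^{*})^2 \geq \tfrac{1}{2}(P_{i_1}-P_{i_2})^2 > 4C$ to beat the $4C$ penalty for the four new breakpoints. Your explicit treatment of the degenerate cases $i_1=i$ or $i_2=j$ is a minor point the paper glosses over but does not change the argument.
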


In future work, we plan to exploit the structure inherent to our problem to obtain a faster, exact algorithm. 


\hide{
\section{Acknowledgments}
R.S. and C.E.T. were supported in this work by U.S. National
Institute of Health award 1R01CA140214.
R.P. was supported in this work by Natural Sciences and Engineering
Research Council of Canada (NSERC),  under Grant PGS M-377343-2009.
G.L.M. was supported in this work by the National  Science Foundation under Grant No. CCF-0635257.

Any opinions, findings, and conclusions or recommendations expressed in this material are those of
the author(s) and do not necessarily reflect the views of the National Institute of Health and National Science Foundation, or other funding parties.
}

\bibliographystyle{plain}
\bibliography{paper}

\begin{thebibliography}{10}

\bibitem{Agarwal99geometricrange}
Pankaj~K. Agarwal and Jeff Erickson.
\newblock Geometric range searching and its relatives.
\newblock In {\em Advances in Discrete and Computational Geometry}, pages
  1--56. American Mathematical Society, 1999.

\bibitem{100260}
A.~Aggarwal, M.~Hansen, and T.~Leighton.
\newblock Solving query-retrieval problems by compacting voronoi diagrams.
\newblock In {\em STOC '90: Proceedings of the twenty-second annual ACM
  symposium on Theory of computing}, pages 331--340, New York, NY, USA, 1990.
  ACM.

\bibitem{10546}
A~Aggarwal, M~Klawe, S~Moran, P~Shor, and R~Wilber.
\newblock Geometric applications of a matrix searching algorithm.
\newblock In {\em SCG '86: Proceedings of the second annual symposium on
  Computational geometry}, pages 285--292, New York, NY, USA, 1986. ACM.

\bibitem{smawk}
Alok Aggarwal, Maria~M. Klawe, Shlomo Moran, Peter~W. Shor, and Robert~E.
  Wilber.
\newblock Geometric applications of a matrix-searching algorithm.
\newblock {\em Algorithmica}, 2:195--208, 1987.

\bibitem{1109562}
Wolfgang~W. Bein, Mordecai~J. Golin, Lawrence~L. Larmore, and Yan Zhang.
\newblock The knuth-yao quadrangle-inequality speedup is a consequence of
  total-monotonicity.
\newblock In {\em SODA '06: Proceedings of the seventeenth annual ACM-SIAM
  symposium on Discrete algorithm}, pages 31--40, New York, NY, USA, 2006. ACM.

\bibitem{862270}
Richard~Ernest Bellman.
\newblock {\em Dynamic Programming}.
\newblock Dover Publications, Incorporated, 2003.

\bibitem{517430}
Dimitri~P. Bertsekas.
\newblock {\em Dynamic Programming and Optimal Control}.
\newblock Athena Scientific, 2000.

\bibitem{240860}
Rainer~E. Burkard, Bettina Klinz, and R\"{u}diger Rudolf.
\newblock Perspectives of monge properties in optimization.
\newblock {\em Discrete Appl. Math.}, 70(2):95--161, 1996.

\bibitem{chazelle_power_1985}
Bernard Chazelle, Leonidas~J. Guibas, and {Der-Tsai} Lee.
\newblock The power of geometric duality.
\newblock {\em {BIT}}, 25(1):76―90, 1985.

\bibitem{323248}
Bernard Chazelle and Franco~P. Preparata.
\newblock Halfspace range search: an algorithmic application of k-sets.
\newblock In {\em SCG '85: Proceedings of the first annual symposium on
  Computational geometry}, pages 107--115, New York, NY, USA, 1985. ACM.

\bibitem{82363}
K.~L. Clarkson and P.~W. Shor.
\newblock Applications of random sampling in computational geometry, ii.
\newblock {\em Discrete Comput. Geom.}, 4(5):387--421, 1989.

\bibitem{Eppstein88speedingup}
David Eppstein, Zvi Galil, and Raffaele Giancarlo.
\newblock Speeding up dynamic programming.
\newblock In {\em In Proc. 29th Symp. Foundations of Computer Science}, pages
  488--496, 1988.

\bibitem{146650}
David Eppstein, Zvi Galil, Raffaele Giancarlo, and Giuseppe~F. Italiano.
\newblock Sparse dynamic programming i: linear cost functions.
\newblock {\em J. ACM}, 39(3):519--545, 1992.

\bibitem{146656}
David Eppstein, Zvi Galil, Raffaele Giancarlo, and Giuseppe~F. Italiano.
\newblock Sparse dynamic programming ii: convex and concave cost functions.
\newblock {\em J. ACM}, 39(3):546--567, 1992.

\bibitem{135865}
Zvi Galil and Kunsoo Park.
\newblock Dynamic programming with convexity, concavity and sparsity.
\newblock {\em Theor. Comput. Sci.}, 92(1):49--76, 1992.

\bibitem{bb12353}
E.N. Gilbert and E.F. Moore.
\newblock Variable-length binary encodings.
\newblock {\em Bell System Tech.}, 38:933--966, July 1959.

\bibitem{1132873}
Sudipto Guha, Nick Koudas, and Kyuseok Shim.
\newblock Approximation and streaming algorithms for histogram construction
  problems.
\newblock {\em ACM Trans. Database Syst.}, 31(1):396--438, 2006.

\bibitem{543637}
Sudipto Guha, Nick Koudas, and Divesh Srivastava.
\newblock Fast algorithms for hierarchical range histogram construction.
\newblock In {\em PODS '02: Proceedings of the twenty-first ACM
  SIGMOD-SIGACT-SIGART symposium on Principles of database systems}, pages
  180--187, New York, NY, USA, 2002. ACM.

\bibitem{360861}
D.~S. Hirschberg.
\newblock A linear space algorithm for computing maximal common subsequences.
\newblock {\em Commun. ACM}, 18(6):341--343, 1975.

\bibitem{671191}
H.~V. Jagadish, Nick Koudas, S.~Muthukrishnan, Viswanath Poosala, Kenneth~C.
  Sevcik, and Torsten Suel.
\newblock Optimal histograms with quality guarantees.
\newblock In {\em VLDB '98: Proceedings of the 24rd International Conference on
  Very Large Data Bases}, pages 275--286, San Francisco, CA, USA, 1998. Morgan
  Kaufmann Publishers Inc.

\bibitem{DBLP:journals/acta/Knuth71}
Donald~E. Knuth.
\newblock Optimum binary search trees.
\newblock {\em Acta Inf.}, 1:14--25, 1971.

\bibitem{DBLP:conf/focs/Matousek91}
Jir\'{\i} Matousek.
\newblock Reporting points in halfspaces.
\newblock In {\em FOCS}, pages 207--215, 1991.

\bibitem{monge}
Gaspard Monge.
\newblock Memoire sue la theorie des deblais et de remblais.
\newblock {\em Histoire de l’Academie Royale des Sciences de Paris, avec les
  Memoires de Mathematique et de Physique pour la meme annee}, pages 666--704,
  1781.

\bibitem{picard}
F.~Picard, S.~Robin, M.~Lavielle, C.~Vaisse, and J.~J. Daudin.
\newblock A statistical approach for array cgh data analysis.
\newblock {\em BMC Bioinformatics}, 6, 2005.

\bibitem{DBLP:conf/sdm/TerziT06}
Evimaria Terzi and Panayiotis Tsaparas.
\newblock Efficient algorithms for sequence segmentation.
\newblock In {\em SDM}, 2006.

\bibitem{trimmer}
C.E. Tsourakakis, D.~Tolliver, Maria~A. Tsiarli, S.~Shackney, and R.~Schwartz.
\newblock Cghtrimmer: Discretizing noisy array cgh data.
\newblock {\em submitted, available at Arxiv: http://arxiv.org/abs/1002.4438},
  2010.

\bibitem{9303}
Michael~S Waterman and Temple~F Smith.
\newblock Rapid dynamic programming algorithms for rna secondary structure.
\newblock {\em Adv. Appl. Math.}, 7(4):455--464, 1986.

\bibitem{fyao}
F.~Yao.
\newblock Speed-up in dynamic programming.
\newblock {\em SIAM J. Alg. Disc. Methods}, 3:532--540, 1982.

\bibitem{804691}
F.~Frances Yao.
\newblock Efficient dynamic programming using quadrangle inequalities.
\newblock In {\em STOC '80: Proceedings of the twelfth annual ACM symposium on
  Theory of computing}, pages 429--435, New York, NY, USA, 1980. ACM.

\end{thebibliography}

\section*{Appendix} 
\label{sec:appendix}
The proof of Theorem~\ref{thrm:mainthrm} follows:

\begin{proof}

We use induction on the number of points. 
Using the same notation as above, let $\bar{DP}_i = \min_{j<i} \tilde{DP}_j - w(j,i) + C$. 
By construction the following inequality holds: 

\begin{equation}
|\bar{DP}_i - \tilde{DP}_i | \leq \frac{\epsilon}{n} ~\forall i=1,\dots,n
\label{eq:bartilde}
\end{equation}

When $i=1$ it is clear that $|DP_1 - \tilde{DP}_1| \leq  \frac{\epsilon}{n}$.
Our inductive hypothesis is the following: 

\begin{equation}
|DP_j - \tilde{DP}_j | \leq \frac{j\epsilon}{n} ~\forall j<i
\label{eq:inductivehypothesis}
\end{equation} 

It suffices to show that the following inequality holds: 

\begin{equation} 
|DP_i - \bar{DP}_i | \leq \frac{(i-1)\epsilon}{n} 
\label{eq:suffices}
\end{equation}
since then by the triangular inequality we obtain: 
$$\frac{ i\epsilon }{n} \geq |DP_i - \bar{DP}_i |+ |\bar{DP}_i - \tilde{DP}_i | \geq |DP_i - \tilde{DP}_i |.$$ 

Let $j^*,\bar{j}$ be the optimum breakpoints for $DP_i$ and $\bar{DP}_i$ respectively,  $j^*,\bar{j} \leq i-1$.

\begin{align*}
DP_i &= DP_{j^*}+ \tilde{w}(j^*,i)+C \leq \\
& \leq DP_{\bar{j}}+ \tilde{w}(\bar{j},i)+C \leq \\ 
& \leq \tilde{DP}_{\bar{j}} + \tilde{w}(\bar{j},i)+C+ \frac{\bar{j}\epsilon}{n} \text{(by \ref{eq:inductivehypothesis})}= \\
& =\bar{DP}_i +  \frac{\bar{j}\epsilon}{n} \leq \\ 
& \leq \bar{DP}_i +  \frac{(i-1)\epsilon}{n}
\end{align*}

Similarly we obtain:

\begin{align*}
\bar{DP}_i &= \tilde{DP}_{\bar{j}}+ \tilde{w}(\bar{j},i)+C \leq \\
& \leq \tilde{DP}_{j^*}+ \tilde{w}(j^*,i)+C \leq \\ 
& \leq  DP_{j^*}+ \tilde{w}(j^*,i)+C+\frac{j^*\epsilon}{n} \text{(by \ref{eq:inductivehypothesis})}= \\
& = DP_i +  \frac{j^*\epsilon}{n} \leq \\ 
& \leq DP_i +  \frac{(i-1)\epsilon}{n}
\end{align*}

Combining the above two inequalities, we obtain \ref{eq:suffices}. \textit{QED}

\end{proof}

The proof of Lemma~\ref{thrm:lemmalast} follows:

\begin{proof}
The proof is by contradiction. Suppose the optimal solution has a segment $[i,j]$ where $i \leq i_1 < i_2 \leq j$, and its optimal
$x$ value is $x^{*}$. Then consider splitting it into 5 intervals $[i, i_1-1]$, $[i_1, i_1]$, $[i_1+1, i_2-1]$, $[i_2, i_2]$, $[i_2+1, r]$.
If we let $x=x^{*}$ in the intervals not containing $i_1$ and $i_2$, their values are same as before.
Also, as  $|P_{i_1}-P_{i_2}|>2\sqrt{2C}$,
$ (P_{i_1}-x)^2 + (P_{i_2}-x)^2>2\sqrt{2C}^2 = 4C$. So by letting $x=P_{i_1}$ in $[i_1, i_1]$ and $x=P_{i_2}$ in $[i_2, i_2]$,
the total decreases by more than $4C$. This is more than the added penalty of having 4 more segments, a contradiction with the
optimality of the segmentation. ~\textit{QED}
\end{proof}

\end{document}